\def\be{\begin{equation}}
\def\ee{\end{equation}}
\def\1{\mathbf{1}}
\newtheorem{theorem}{Theorem}[section]
\newtheorem{lemma}[theorem]{Lemma}
\newtheorem{remark}[theorem]{Remark}
\title{Stability of twisted states on lattices of Kuramoto oscillators}
\author{Monica Goebel\thanks{Department of 
		Mathematics and Statistics,
		The College of New Jersey} \and Matthew S. Mizuhara$^{*,}$\thanks{Corresponding author:
		{\tt  mizuharm@tcnj.edu}} \and Sofia Stepanoff$^*$}
\begin{document}

\maketitle

	\begin{abstract}
		Real world systems comprised of coupled oscillators have the ability to exhibit spontaneous synchronization and other complex behaviors. The interplay between the underlying network topology and the emergent dynamics remains a rich area of investigation for both theory and experiment. In this work we study lattices of coupled Kuramoto oscillators with non-local interactions. Our focus is on the stability of twisted states. These are equilibrium solutions with constant phase shifts between oscillators resulting in spatially linear profiles. Linear stability analysis follows from studying the quadratic form associated with the Jacobian matrix. Novel estimates on both stable and unstable regimes of twisted states are obtained in several cases. Moreover, exploiting the ``almost circulant'' nature of the Jacobian obtains a surprisingly accurate numerical test for stability. While our focus is on 2D square lattices, we show how our results can be extended to higher dimensions.
	\end{abstract}

	\section{Introduction}
	
	Coupled systems of oscillators are ubiquitous in physical, chemical, and biological settings, including collections of synchronous fireflies, cardiac pacemakers, neuron networks, and electrical power grids. Such systems often exhibit spontaneous emergence of large scale synchronization and other complex collective effects despite inherent heterogeneity among the individual components \cite{strogatz2012sync, pikovsky2003synchronization}. The Kuramoto model is perhaps the most widely studied mathematical model describing such systems \cite{kuramoto1975self}. It is comprised of phase oscillators which are coupled attractively through potentially complex networks of interactions. It has led to extremely fruitful research in various fields since it is able to capture the onset of synchronization and other complex patterns such as chimeras and clusters, yet it is simple enough to be amenable to rigorous analysis revealing underlying mechanisms generating such states \cite{rodrigues2016kuramoto, acebonluivicperfelspi05, strogatz2000kuramoto}.
	
	In this work we study coupled Kuramoto oscillators on square lattices. From a modeling point of view, lattices are very natural to study as they account for 2D and 3D spatial organization among oscillators. Since the pioneering work of Sakaguchi, Shinomoto, and Kuramoto \cite{sakaguchi1987local}, there have been many studies of coupled oscillator systems on lattices investigating synchronization \cite{strogatz1988phase, daido1988lower, hong2005collective, laing2011fronts, aoyagi1991frequency, ostborn2009renormalization, migliorini2008critical, belykh2003cluster, belykh2003persistent, corral1995self}, spiral patterns \cite{ottino2016frequency, lee2010vortices, paullet1994stable, sarkar2021phase, bukh2019spiral, totz2018spiral}, and chimeras \cite{shima2004rotating, laing2009dynamics,shepelev2020quantifying, hagerstrom2012experimental,shepelev2018double, nkomo2013chimera,omel2012stationary}.
	
	Our focus is on the analysis of twisted state solutions of Kuramoto oscillators on lattices. Twisted states (sometimes called splay states) are equilibrium solutions which exhibit constant phase shifts between neighboring oscillators, resulting in spatially linear patterns. Such states can represent, e.g., traveling waves of activity in spatially extended systems. Twisted states have been investigated for coupled oscillator systems on various networks \cite{omel2014partially, chiba2018bifurcations, medtan15, medvedev2014small, girnyk2012multistability, deville2019synchronization, lee2018twisted, wiley2006size, bolotov2019twisted,medvedev2021chimeras}, however the study of twisted states on lattices remains relatively unexplored \cite{salova2020decoupled, lee2018twisted}. Our work primarily extends the results of Lee, Cho, and Hong \cite{lee2018twisted} where twisted states on nearest neighbor coupled 2D and 3D lattices were investigated. We study more general types of twisted state on lattices with longer range interactions.	
	
	After introducing the model, we prove sharp estimates in the case of nearest neighbor coupling. Next, we show how to extend our techniques to include lattices with additional coupling from diagonal interactions. We obtain sufficient conditions for both stability and instability, however unlike the nearest neighbor case, these estimates are no longer sharp. We then show how our work naturally extends to arbitrary interaction ranges and in general dimensions. Finally, to show how one can hope to sharpen estimates in the non-local cases, we present an approximation of the eigenvalues of the Jacobian, which does an impressively accurate job of predicting both regions of stability and instability. This test is based on recognizing the Jacobian matrix as ``almost circulant.''

	%
	%
	
	\subsection{Model} 
	
	We first recall the Kuramoto model of coupled oscillators,
	\begin{equation}\label{eq:kuramoto}
	\dot{u}_{i} = \omega_{i} + K\sum_{k=1}^n A_{i,k} \sin(u_k-u_i),\;\; i\in \{1,\dots, n\},
	\end{equation}
	where $u_i \colon [0,T]\to \mathbb{S}^1$ is the phase of the $i$th oscillator, $\omega_i$ is its natural frequency, $K$ the coupling strength, and $A=(A_{i,j})$ is the adjacency matrix encoding the network topology. We begin by assuming a 2D lattice of oscillators so that $n = L\times L$. For simplicity we work with identical oscillators: $\omega_i\equiv \omega$. Then we can move into a rotating frame $u_i+\omega t$ and rescale time so that $K=1$.	
	
	To build the adjacency matrix, we introduce the convenient relabeling 
	\begin{equation*}
	\theta_{i,j} = u_{(i-1)L + j},\;\; i,j\in\{1,\dots, L\}
	\end{equation*}
	so as to translate between a vector of oscillators $u_i$ and an array of oscillators $\theta_{i,j}$ so that \eqref{eq:kuramoto} can be written as
	\begin{equation}\label{eq:kuramoto1}
	\dot{\theta}_{i,j} = \sum_{j=1}^L \sum_{k=1}^L A_{(i-1)L+j,(k-1)L+\ell} \sin(\theta_{k,\ell}-\theta_{i,j}).
	\end{equation}
	
	Writing $\theta_{i,j}\sim \theta_{k,\ell}$ to mean that two oscillators are adjacent, then {\em $r$-nearest neighbor} coupling is given by	
	\begin{equation}
	\theta_{i,j} \sim \theta_{k,\ell} \mbox{ if and only if } 0<\|(i,j)-(k,\ell)\|_{T^2} \leq r^2	
	\end{equation}
	where
	\begin{equation*}
	\|(i,j)-(k,\ell)\|_{T^2} := \min\{|i-k|,L-|i-k|\}^2 + \min\{|j-\ell|,L-|j-\ell|\}^2
	\end{equation*}
	is the Euclidean distance on the 2D array with periodic boundaries (e.g., a flat torus). It is evident that all connections are symmetric, resulting in an adjacency matrix $A$ which is symmetric. Figure \ref{fig:lattice} shows the coupling for several representative values of $r$.
	
	\begin{figure}
		\centering
		{\bf a}\includegraphics[width = .25\textwidth]{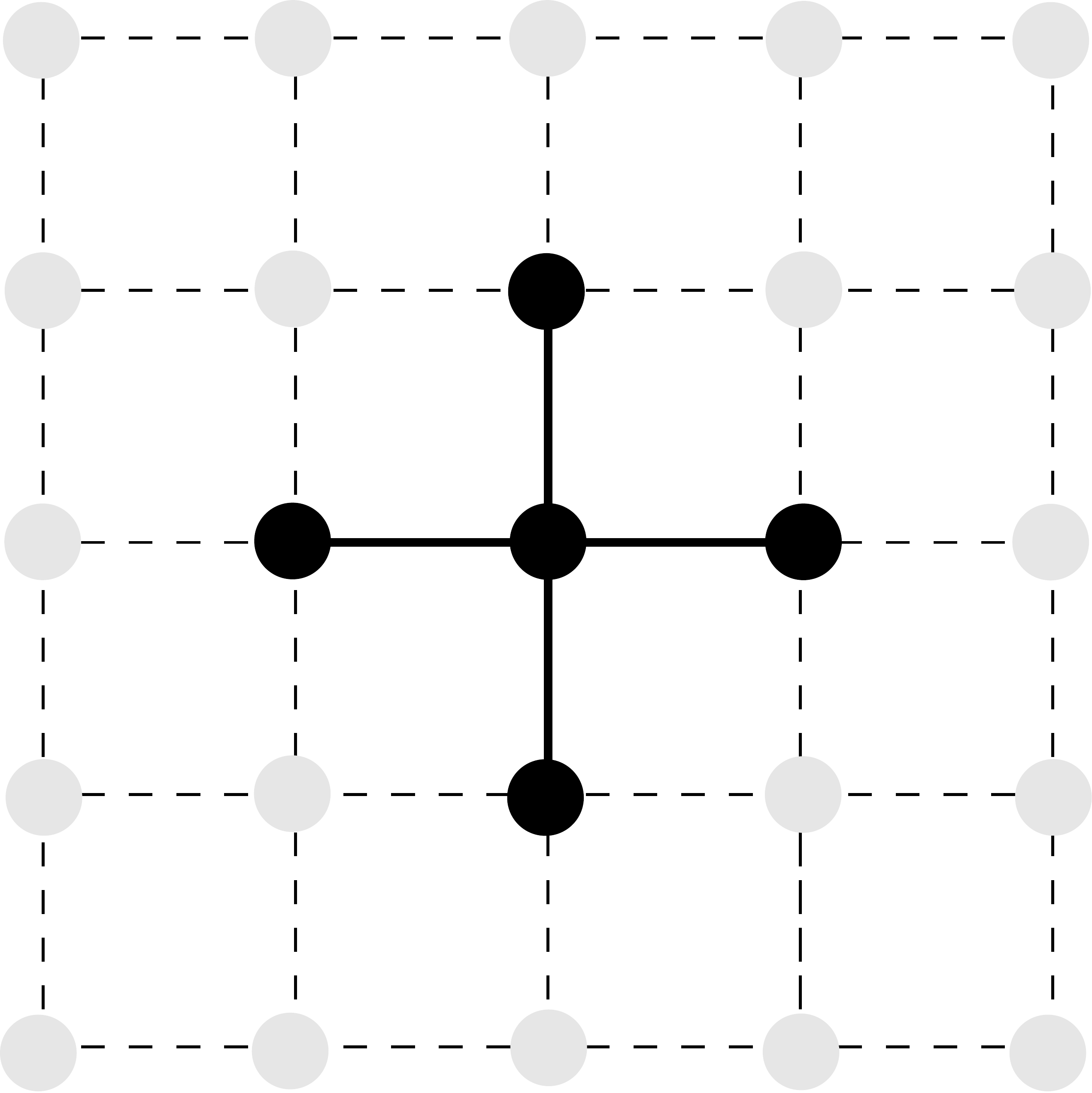}
		\hspace*{3mm}
		{\bf b}\includegraphics[width = .25\textwidth]{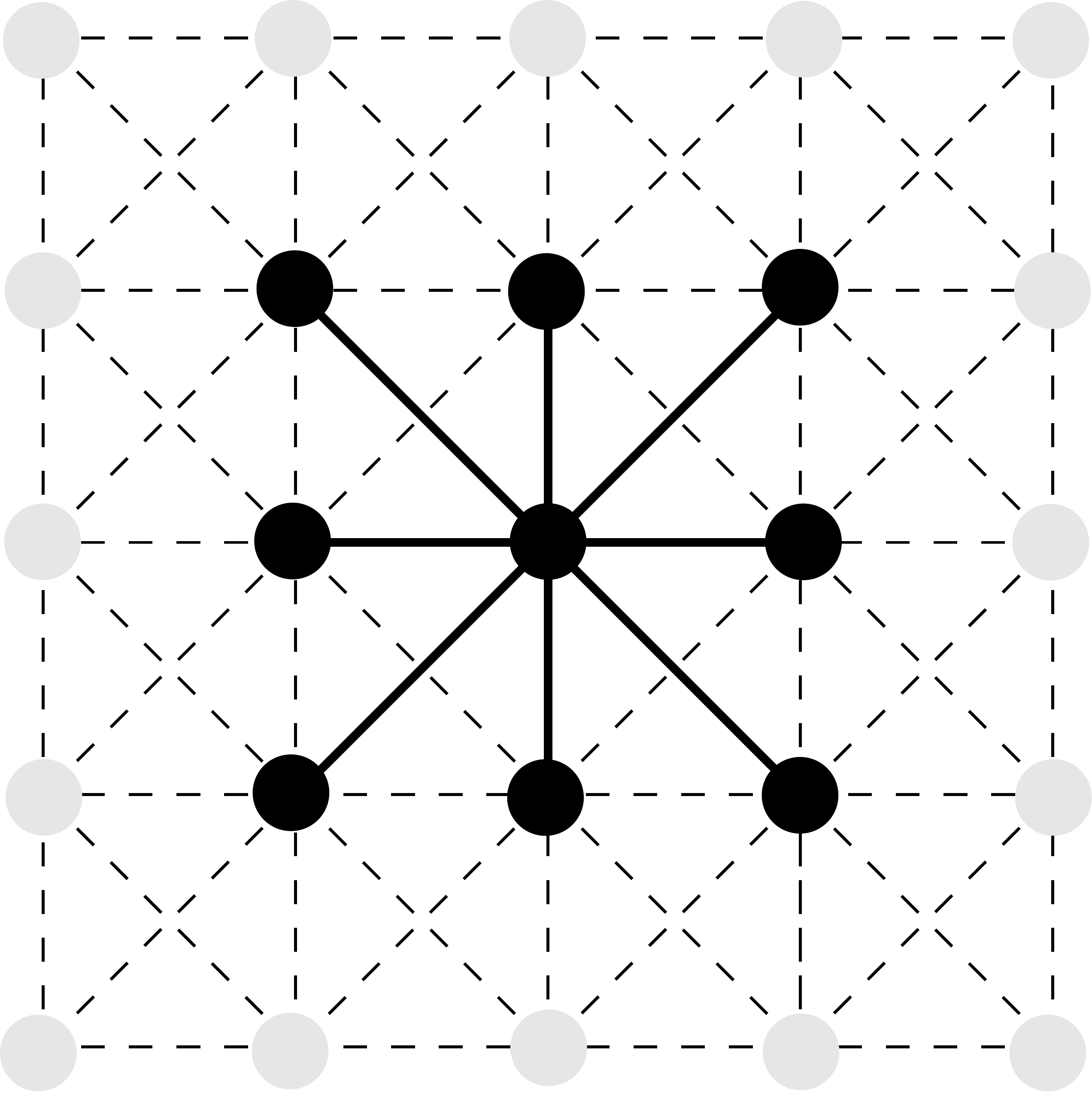}
		\hspace*{3mm}
		{\bf c}\includegraphics[width = .25\textwidth]{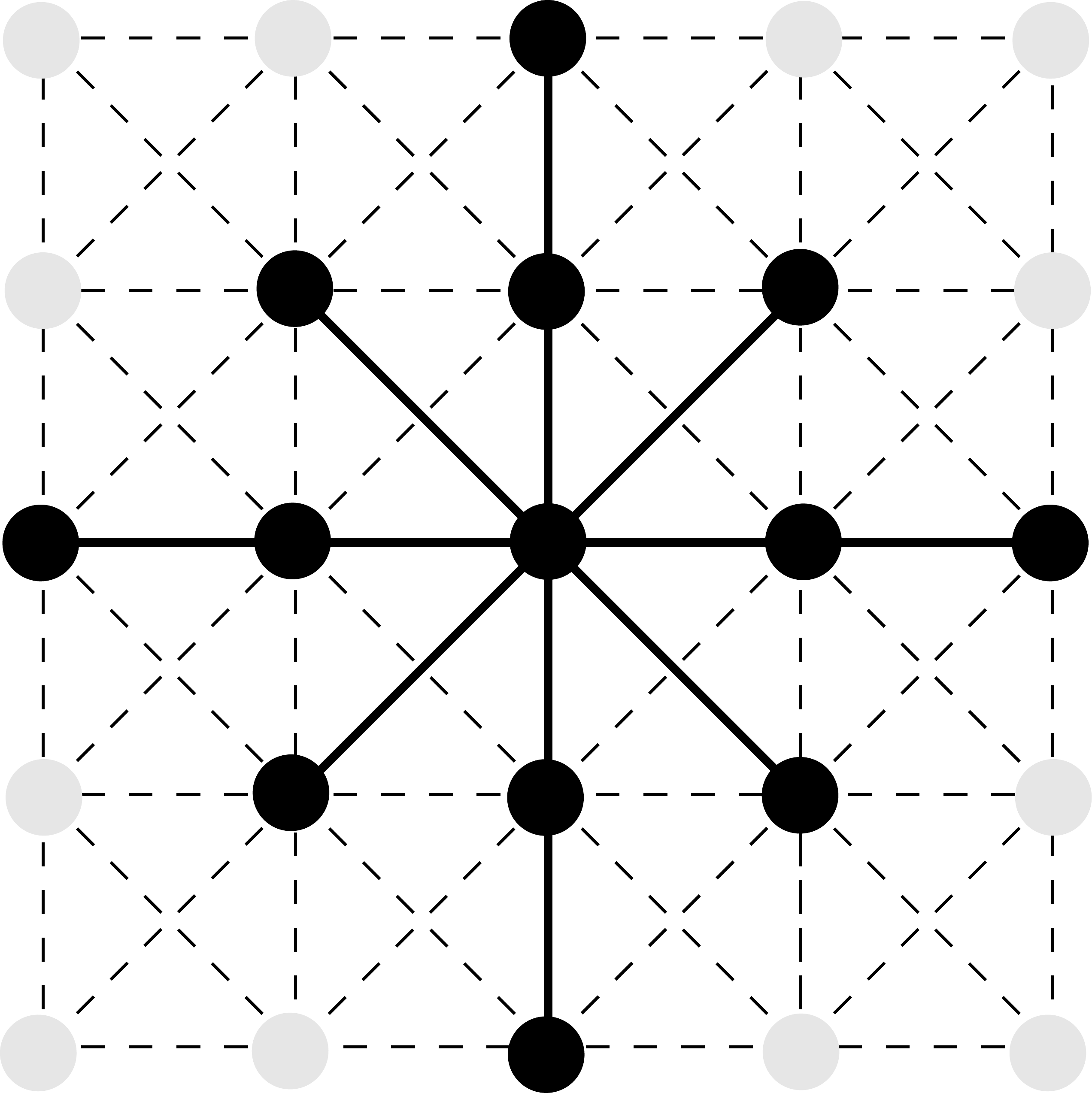}
		\caption{Diagram of $r$-nearest neighbor coupling on a 2D array for ({\bf a}) $r=1$, ({\bf b}) $r=\sqrt{2}$, ({\bf c}) $r=2$. Each diagram highlights the connections for the center oscillator only, and we omit drawing any connections on the boundary which must ``wrap'' around the figure.}
		\label{fig:lattice}
	\end{figure}

	\section{Stability of twisted states}
	
	Let $q_1,q_2\in \mathbb{Z}$. A $(q_1,q_2)$-twisted state is an equilibrium solution of the form
	\begin{equation}
	\theta_{i,j} = \frac{2\pi q_1 i }{L} + \frac{2\pi q_2 j}{L} + C \pmod{2\pi},
	\end{equation}
	where we restrict $|q_i|\leq \lfloor L/2 \rfloor$. The $(0,0)$-twisted state corresponds to complete phase synchronization. Figure \ref{fig:snapshot} shows an example of a $(3,2)$-twisted state solution.
	
	\begin{figure}[h]
		\centering
		\includegraphics[width = .45\textwidth]{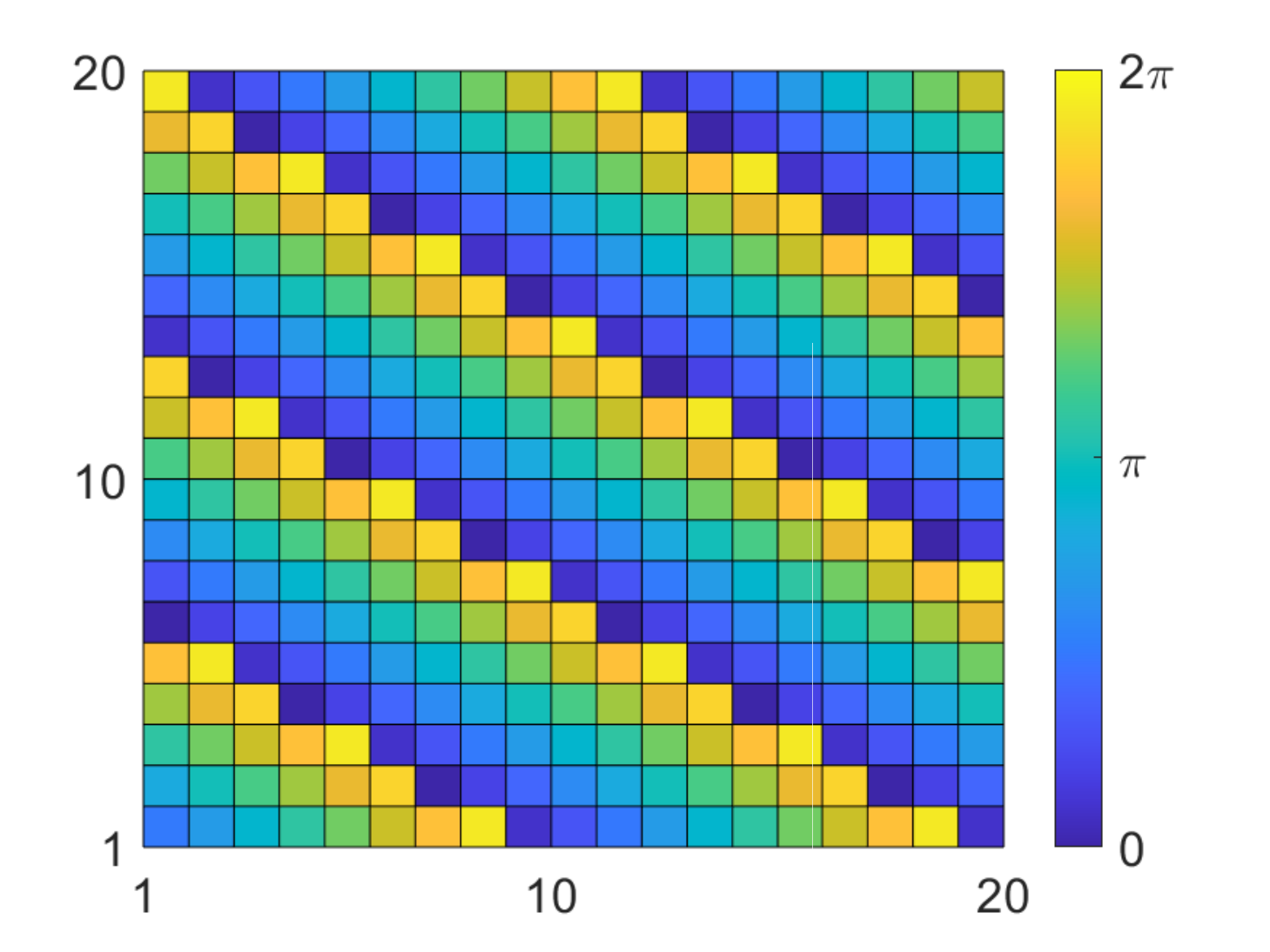}
		\caption{Snapshot of a $(3,2)$-twisted state equilibrium state on a lattice of size $20\times 20$.}
		\label{fig:snapshot}
	\end{figure}

	\begin{lemma}
		Given any coupling distance $r$, the $(q_1,q_2)$-twisted state is an equilibrium solution of \eqref{eq:kuramoto1}.
	\end{lemma}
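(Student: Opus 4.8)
The plan is to substitute the twisted-state ansatz directly into the right-hand side of \eqref{eq:kuramoto1} and check that the resulting finite sum of sines vanishes identically, using nothing more than the symmetry of the coupling. First I would rewrite the $r$-nearest neighbor dynamics in translation-invariant form
\[
\dot\theta_{i,j}=\sum_{(a,b)\in S_r}\sin(\theta_{i+a,\,j+b}-\theta_{i,j}),\qquad S_r:=\{(a,b)\in(\Z/L\Z)^2 : 0<\|(a,b)\|_{T^2}\le r^2\},
\]
where the indices $i+a$, $j+b$ are read modulo $L$. The one structural fact I need is that the coupling is symmetric, i.e.\ $S_r=-S_r$; this is immediate because $\|\cdot\|_{T^2}$ is invariant under $a\mapsto -a$ and $b\mapsto -b$ (equivalently, under $a\mapsto L-a$), which is exactly the observation in the text that $A$ is symmetric.

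Next I would plug in $\theta_{i,j}=\tfrac{2\pi q_1 i}{L}+\tfrac{2\pi q_2 j}{L}+C$. Both the additive constant $C$ and the base point $(i,j)$ cancel inside each sine, since $\theta_{i+a,\,j+b}-\theta_{i,j}=\tfrac{2\pi}{L}(q_1 a+q_2 b)$, so that
\[
\dot\theta_{i,j}=\sum_{(a,b)\in S_r}\sin\!\Big(\tfrac{2\pi}{L}(q_1 a+q_2 b)\Big),
\]
an expression independent of $(i,j)$. It therefore remains only to show this number is $0$, which is where the symmetry enters: the summand $f(a,b):=\sin\big(\tfrac{2\pi}{L}(q_1 a+q_2 b)\big)$ is a well-defined function on the torus $(\Z/L\Z)^2$, it is odd under the involution $\sigma\colon(a,b)\mapsto(-a,-b)$, and $\sigma$ maps $S_r$ onto itself. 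Pairing each $(a,b)$ with $\sigma(a,b)$ thus cancels the sum term by term, except possibly at fixed points of $\sigma$, i.e.\ displacements with $2a\equiv 2b\equiv 0\pmod L$ (these occur only for $L$ even, with $a,b\in\{0,L/2\}$, and $(0,0)\notin S_r$). But at any such point $2(q_1 a+q_2 b)\equiv 0\pmod L$, so $\tfrac{2\pi}{L}(q_1 a+q_2 b)$ is an integer multiple of $\pi$ and $f(a,b)=0$ already. Hence the right-hand side vanishes and the $(q_1,q_2)$-twisted state is an equilibrium, for every $r$.

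The argument is essentially a routine verification; the only place that calls for a little care is the bookkeeping on the discrete torus — making sure the antipodal pairing is compatible with the periodic distance $\|\cdot\|_{T^2}$ (it is, because that distance is even in each coordinate), and confirming that the fixed points of the pairing, which appear when $L$ is even, contribute zero rather than being omitted. I do not anticipate any genuine obstacle.
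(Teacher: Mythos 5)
Your proof is correct and follows essentially the same route as the paper: substitute the twisted state, observe the phase difference depends only on the displacement $(a,b)$, and cancel the sines by pairing each neighbor with its antipode. Your explicit handling of the self-paired displacements when $L$ is even (where $a$ or $b$ equals $L/2$) is a small point of care that the paper's proof glosses over, but it does not change the argument.
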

	
	\begin{proof}
		Notice from the definition of $r$-nearest neighbor coupling that if $\theta_{i,j} \sim \theta_{i+\Delta_1,j+\Delta_2}$ then $\theta_{i,j} \sim \theta_{i-\Delta_1 ,j-\Delta_2 }$, where indices are interpreted modulo $L$ (e.g., if $i-\Delta_1 = 0$ then we interpret it as $i-\Delta_1 = L$). Then
		\begin{align*}
		\sin(\theta_{i+\Delta_1,j+\Delta_2} - \theta_{i,j}) &= \sin\left(\frac{2\pi q_1 \Delta_1}{L} + \frac{2\pi q_2 \Delta_2}{L}\right) \\
		& = -\sin\left(\frac{2\pi q_1 (-\Delta_1)}{L} + \frac{2\pi q_2 (-\Delta_2)}{L}\right)=-	\sin(\theta_{i-\Delta_1,j-\Delta_2} - \theta_{i,j}).
		\end{align*}
		Thus any non-zero terms in the right hand sum of \eqref{eq:kuramoto1} must come in equal and opposite pairs. 
	\end{proof}
	
	We opt to formulate the linear stability problem in terms of quadratic forms to both simplify and streamline the following proofs. This will also allow for more insight when extending to longer range interactions.
	
	\begin{lemma}\label{lem:jacobian}
		The Jacobian matrix of \eqref{eq:kuramoto} at an equilibrium $\bar{u} = (\bar{u}_1,\dots, \bar{u}_n)^T$ has associated quadratic form defined by
		\begin{equation}
		x^T J x = -\frac{1}{2}\sum_{j=1}^n\sum_{k\neq j} A_{j,k} \cos(\bar{u}_k-\bar{u}_j)(x_j-x_k)^2. \label{eq:quadraticform}
		\end{equation}
	\end{lemma}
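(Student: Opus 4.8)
The plan is to differentiate the right-hand side of \eqref{eq:kuramoto} componentwise, recognize the resulting matrix as the negative of a symmetrically weighted graph Laplacian, and then apply the elementary identity expressing the quadratic form of a Laplacian as a sum over edges. Concretely, write $f_i(u) = \omega_i + \sum_{k=1}^n A_{i,k}\sin(u_k - u_i)$, so that $J_{i,j} = \partial f_i/\partial u_j$ evaluated at $\bar u$. For $j\neq i$ only the $k=j$ term of the sum contributes, giving $J_{i,j} = A_{i,j}\cos(\bar u_j - \bar u_i)$, while differentiating in $u_i$ gives $J_{i,i} = -\sum_{k\neq i} A_{i,k}\cos(\bar u_k - \bar u_i)$.

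Next I would set $w_{j,k} := A_{j,k}\cos(\bar u_k - \bar u_j)$ and observe that $w_{j,k} = w_{k,j}$, since $A$ is symmetric and cosine is even. Then $J$ has off-diagonal entries $w_{j,k}$ and zero row sums, i.e. $J = -\mathcal{L}$ where $\mathcal{L}$ is the weighted graph Laplacian with edge weights $w_{j,k}$. It remains to expand $x^T J x = \sum_j J_{j,j}x_j^2 + \sum_j\sum_{k\neq j} w_{j,k}x_j x_k = -\sum_j\sum_{k\neq j} w_{j,k}(x_j^2 - x_j x_k)$, and then symmetrize the double sum under the swap $j\leftrightarrow k$, using $w_{j,k}=w_{k,j}$, to replace the summand by $\tfrac12 w_{j,k}(x_j-x_k)^2$. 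Substituting back the definition of $w_{j,k}$ gives exactly \eqref{eq:quadraticform}.

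I do not expect a genuine obstacle here; the argument is a routine computation. The one point that deserves care is the symmetry $w_{j,k}=w_{k,j}$, as it is precisely this that lets the cross terms combine into a perfect square with the overall factor $-\tfrac12$. It is also worth noting in passing that the equilibrium hypothesis on $\bar u$ is never used: the formula holds for the Jacobian of \eqref{eq:kuramoto} at any phase configuration, and enters the stability analysis only because linearization is performed at equilibria.
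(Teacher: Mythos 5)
Your proposal is correct and follows essentially the same route as the paper: compute the Jacobian entries explicitly, expand $x^TJx$, and symmetrize the double sum using the symmetry of $A$ and the evenness of cosine to produce the $-\tfrac12(x_j-x_k)^2$ summand. The weighted-Laplacian framing and the observation that the equilibrium hypothesis is not actually needed are nice touches, but they do not change the substance of the argument.
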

	
	\begin{proof}
		Explicit calculation of the Jacobian of \eqref{eq:kuramoto} shows
		\begin{equation}
		J_{i,j} = \left\{ \begin{array}{cc}
		-\displaystyle\sum_{k=1}^n A_{i,k} \cos(\bar{u}_k - \bar{u}_i) &\mbox{if } i=j \\
		A_{i,j} \cos(\bar{u}_j-\bar{u}_i) & \mbox{if } i\neq j.
		\end{array}\right. 
		\end{equation}
		It follows that
		\begin{align*}
		x^T J x &= x^T \left(\begin{array}{c}
		-\displaystyle\sum_{k=1}^n A_{1,k} \cos(\bar{u}_k - \bar{u}_1)x_1 +\sum_{k\neq 1} A_{1,k} \cos(\bar{u}_k-\bar{u}_1)x_k\\
		\vdots \\
		-\displaystyle\sum_{k=1}^n A_{n,k} \cos(\bar{u}_k - \bar{u}_n)x_n +\sum_{k\neq n} A_{n,k} \cos(\bar{u}_k-\bar{u}_n)x_k
		\end{array}\right)\\
		&= \sum_{j=1}^n \sum_{k\neq j} A_{j,k} \cos(\bar{u}_k-\bar{u}_j)(-x_j^2+x_jx_k).
		\end{align*} 
		Since $A$ is symmetric and cosine is even, we can rearrange the sums to combine the $A_{j,k}$ and $A_{k,j}$ terms so
		\begin{align}
		x^T J x &= \frac{1}{2}\sum_{j=1}^n \sum_{k\neq j} A_{j,k} \cos(\bar{u}_k-\bar{u}_j)(-x_j^2+2x_jx_k-x_k^2) \\
		&= -\frac{1}{2}\sum_{j=1}^n \sum_{k\neq j} A_{j,k} \cos(\bar{u}_k-\bar{u}_j)(x_j-x_k)^2.
		\end{align}
	\end{proof}
	
	Since $J$ is symmetric its eigenvalues are real. We recall from standard results on quadratic forms that $x^T J x <0$ for all $x\in \mathbb{R}^n$ if and only if the eigenvalues of $J$ are all less than zero, and if $x^T J x$ is sometimes positive and sometimes negative, then $J$ has eigenvalues of both positive and negative sign. Moreover if $x$ is an eigenvector of $J$ with eigenvalue $\lambda$ then
	\begin{equation*}
	x^T J x = \lambda \|x\|^2.
	\end{equation*} 
	As such, we can deduce the stability of a given equilibrium by estimating the range of the quadratic form.
	
	
	\subsection{Nearest neighbor interactions}

	On 1D ring networks Wiley, Strogatz, and Girvan \cite{wiley2006size} established stability criteria for $q$-twisted states. Later Lee, Cho, and Hong \cite{lee2018twisted} established stability of $(q_1,0)$-twisted states for $r=1$ nearest neighbor graphs on 2D and 3D lattices. In the more general case of $(q_1,q_2)$-twisted states, stability analysis is similar in spirit to the latter work. However we believe our formulation using quadratic forms is more readily amenable to extensions toward longer range interactions and higher dimensions.
	
	\begin{theorem}\label{thm:nearest_neighbor}
		Let $1\leq r < \sqrt{2}$ (nearest neighbor coupling). The $(q_1,q_2)$-twisted state is asymptotically stable if
		\begin{equation*} 
		4\max\{|q_1|,|q_2|\}< L.
		\end{equation*}
		If $4\max\{|q_1|,|q_2|\}>L$ then the $(q_1,q_2)$-twisted state is unstable.
	\end{theorem}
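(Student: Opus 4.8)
The plan is to specialize the quadratic form \eqref{eq:quadraticform} to nearest-neighbor coupling and then simply read off the sign of the cosine weights. First I would note that for $1\le r<\sqrt2$ each oscillator $\theta_{i,j}$ is coupled exactly to its four axial neighbors $\theta_{i\pm1,j},\theta_{i,j\pm1}$ (indices mod $L$), and that along a $(q_1,q_2)$-twisted state the phase difference across a horizontal edge is $\pm 2\pi q_1/L$ and across a vertical edge is $\pm 2\pi q_2/L$. Since cosine is even, Lemma~\ref{lem:jacobian} gives, once the factor $\tfrac12$ absorbs the double counting of the periodic edge set,
\begin{equation*}
x^T J x = -\cos\!\left(\frac{2\pi q_1}{L}\right)\sum_{i,j=1}^{L}(x_{i+1,j}-x_{i,j})^2 \;-\; \cos\!\left(\frac{2\pi q_2}{L}\right)\sum_{i,j=1}^{L}(x_{i,j+1}-x_{i,j})^2,
\end{equation*}
where $x_{i,j}$ denotes the relabeled components of $x$ and all indices are read modulo $L$. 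The elementary fact driving everything is that, under the constraint $|q|\le\lfloor L/2\rfloor$, one has $\cos(2\pi q/L)>0$ iff $4|q|<L$ and $\cos(2\pi q/L)<0$ iff $4|q|>L$.

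For the stability half, assume $4\max\{|q_1|,|q_2|\}<L$, so both cosine weights are strictly positive. Then $x^TJx\le 0$ for every $x$, and equality forces all horizontal and all vertical nearest-neighbor differences of $x$ to vanish; since the nearest-neighbor lattice graph is connected, this means $x$ is constant. Hence $J$ is negative semidefinite with kernel exactly the one-dimensional space of constant vectors --- the trivial zero mode from the phase-shift invariance of \eqref{eq:kuramoto1} --- and every other eigenvalue is strictly negative, which by the quadratic-form facts recalled after Lemma~\ref{lem:jacobian} yields asymptotic stability (modulo the trivial direction).

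For the instability half, assume $4\max\{|q_1|,|q_2|\}>L$; without loss of generality $\cos(2\pi q_1/L)<0$. I would isolate the unstable direction with an explicit test vector: take $x_{i,j}=f(i)$ for any non-constant $f$ on $\Z/L\Z$. Then every vertical difference vanishes and
\begin{equation*}
x^T J x = -L\cos\!\left(\frac{2\pi q_1}{L}\right)\sum_{i=1}^L (f(i+1)-f(i))^2 > 0,
\end{equation*}
so $x^TJx$ attains a positive value; therefore $J$ has a positive eigenvalue and the twisted state is unstable.

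I do not expect a genuine obstacle: once the quadratic form is in hand this is essentially a one-line sign computation. The only points requiring care are the bookkeeping of the periodic edge sums (so that the $\tfrac12$ in \eqref{eq:quadraticform} is correctly absorbed), the use of connectivity to pin down the kernel, and stating ``asymptotically stable'' with the correct caveat about the ever-present trivial zero eigenvalue. Sharpness --- the stable and unstable thresholds agreeing up to the measure-zero boundary $4\max\{|q_1|,|q_2|\}=L$ --- is automatic here because the \emph{same} cosine weights govern both directions of the argument; this is exactly the feature that breaks down once diagonal couplings are introduced, since then each edge weight is a sum of cosines of different arguments.
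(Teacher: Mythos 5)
Your proposal is correct and follows essentially the same route as the paper: specialize the quadratic form of Lemma~\ref{lem:jacobian} to nearest neighbors, use positivity of the cosine weights under $4\max\{|q_1|,|q_2|\}<L$ to get negative semidefiniteness with kernel pinned down by connectivity, and exhibit instability via a test vector constant along one lattice direction (your $x_{i,j}=f(i)$ family contains the paper's choice, the indicator of a single row, as a special case). The only cosmetic differences are that you write the quadratic form explicitly as a sum over undirected edges and allow a slightly more general test function; neither changes the substance of the argument.
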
	
	
	\begin{proof}
		We begin by showing that $J$ is negative semi-definite. 
		For a twisted state with nearest neighbor coupling there are only two possibilities for adjacent terms:
		\begin{equation*}
		\bar{u}_k -\bar{u}_i = \left\{\begin{array}{cl}	
		\pm \frac{2\pi q_1}{L} & \mbox{for vertical neighbors} \\
		\pm\frac{2\pi q_2}{L} & \mbox{for horizontal neighbors}
		\end{array}\right.
		\end{equation*}
		The assumption $4\max\{|q_1|,|q_2|\}<L$ guarantees that $\cos(\bar{u}_k-\bar{u}_j)> 0$ and so every term in \eqref{eq:quadraticform} has definite sign and $x^TJx \leq 0$ for all $x$, and thus all eigenvalues of $J$ are non-positive.
		
		\medskip
		
		By taking $x = (1,\dots, 1)^T$ we see $Jx = 0$. This eigenvector corresponds to constant shifts of the equilibrium solution. To establish asymptotic stability, we show transverse stability to this $0$ eigenspace. To that end we show that this eigenvector spans the $0$ eigenspace by which we conclude all other eigenvalues must be strictly negative. Suppose $y\in\mathbb{R}^n$ is any $0$ eigenvector of $J$. Then $y^TJy=0$ so
		\begin{equation*}
		\sum_{j=1}^n\sum_{k\neq j} A_{j,k} \cos(\bar{u}_k-\bar{u}_j)(y_j-y_k)^2=0.
		\end{equation*}
		For any terms in the sum corresponding to adjacent oscillators, $A_{j,k}=1$ and so it follows that $y_j=y_k$, since $\cos(\bar{u}_k-\bar{u}_j)\neq 0$. Since the graph is connected $y_j \equiv C$ and thus the $0$ eigenspace is one dimensional. We conclude that all other eigenvalues are strictly negative and thus the equilibrium is stable.
		
		\medskip
		
		On the other hand suppose 	$4\max\{|q_1|,|q_2|\}> L$. Without loss of generality we assume $4q_1 > L$ (and recall that $|q_1| \leq \lfloor L/2 \rfloor$). Taking $x_1 = x_2= \cdots = x_L = 1$ and all other $x_i=0$ we obtain
		\begin{equation*}
		x^T J x = -\frac{1}{2}\sum_{j=1}^L 2\cos\left(\frac{2\pi q_1}{L}\right) = -L\cos\left(\frac{2\pi q_1}{L}\right) >0.
		\end{equation*}
		As such $J$ has positive eigenvalues and the twisted state is unstable. This construction exploits the fact that the phase-shifts in the vertical direction cause the instability and as such we choose terms $x_i$ corresponding to a single horizontal line of oscillators in the $L\times L$ array. If, on the other hand, $4q_2> L$ then one instead takes $x_i$ corresponding to a vertical line of oscillators in the array.
	\end{proof}
	
	\begin{remark}
		Linear stability analysis fails in the borderline case when $4\max\{|q_1|,|q_2|\} =L$ as the $0$ eigenspace becomes higher dimensional. As such one expects to require a more subtle center manifold analysis to establish the asymptotic dynamics. Numerical simulations suggest that these twisted states are unstable though we leave the rigorous analysis of this question for future work.
	\end{remark}
	
	\subsection{Diagonal interactions}
	
	In the range $\sqrt{2}\leq r < 2$ oscillators are connected to diagonal neighbors in addition to the nearest neighbors (see Figure \ref{fig:lattice}{\bf b}). Our previous analysis extends reasonably well, however our estimates are no longer sharp.
	
	\begin{theorem}\label{thm:diag_sufficient}
		Let $\sqrt{2}\leq r <2$. If $4(|q_1|+|q_2|)<L$ then the $(q_1,q_2)$-twisted state is asymptotically stable.
	\end{theorem}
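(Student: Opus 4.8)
The plan is to follow the same three-step strategy used in the proof of Theorem~\ref{thm:nearest_neighbor}: first show that under the hypothesis the Jacobian $J$ is negative semi-definite, then identify its kernel as the one-dimensional span of $(1,\dots,1)^T$, and finally conclude asymptotic stability from transverse stability to this neutral direction.

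First I would enumerate the possible phase differences between coupled oscillators. For $\sqrt{2}\le r<2$ the neighbors of $\theta_{i,j}$ are the four axis neighbors $(i\pm1,j)$, $(i,j\pm1)$ together with the four diagonal neighbors $(i\pm1,j\pm1)$. Hence for a $(q_1,q_2)$-twisted state the differences $\bar u_k-\bar u_j$ take the values $\pm\frac{2\pi q_1}{L}$, $\pm\frac{2\pi q_2}{L}$ on axis bonds and $\pm\frac{2\pi(q_1+q_2)}{L}$, $\pm\frac{2\pi(q_1-q_2)}{L}$ on diagonal bonds. Since $|q_1|,|q_2|,|q_1\pm q_2|\le|q_1|+|q_2|$ and the hypothesis gives $\frac{2\pi(|q_1|+|q_2|)}{L}<\frac{\pi}{2}$, every cosine appearing in the quadratic form \eqref{eq:quadraticform} is strictly positive. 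Consequently every summand of \eqref{eq:quadraticform} is $\le 0$, so $x^TJx\le 0$ for all $x\in\mathbb{R}^n$ and all eigenvalues of $J$ are non-positive.

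Next I would pin down the kernel. Evaluating \eqref{eq:quadraticform} on a hypothetical $0$-eigenvector $y$ gives $\sum_{j}\sum_{k\neq j}A_{j,k}\cos(\bar u_k-\bar u_j)(y_j-y_k)^2=0$; since each coefficient $A_{j,k}\cos(\bar u_k-\bar u_j)$ is either $0$ or strictly positive, we obtain $y_j=y_k$ whenever the corresponding oscillators are coupled. Because the $\sqrt{2}\le r<2$ coupling graph contains the $r=1$ nearest-neighbor graph on the torus as a spanning subgraph, it is connected, so $y$ must be constant. Hence $\ker J=\operatorname{span}\{(1,\dots,1)^T\}$, which is exactly the neutral direction generated by the invariance $C\mapsto C+c$ of the twisted-state family; all remaining eigenvalues are strictly negative and the equilibrium is asymptotically stable.

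The only genuinely delicate point — and the reason the estimate is no longer sharp — lies in the first step: we demand that \emph{every} cosine, including those on the diagonal bonds whose argument $\frac{2\pi(|q_1|+|q_2|)}{L}$ is the largest, be positive, which forces the conservative bound $4(|q_1|+|q_2|)<L$. Once some diagonal cosines become negative, $J$ need not become indefinite, because the stabilizing (negative) contributions of the axis bonds can still outweigh the destabilizing (positive) contributions of the diagonal bonds; capturing that cancellation requires a finer spectral estimate rather than the term-by-term sign argument, which is precisely what the ``almost circulant'' analysis later in the paper is designed to supply. I expect no difficulty in the kernel step, since connectivity of the torus lattice is immediate.
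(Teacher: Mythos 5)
Your proposal is correct and follows essentially the same route as the paper's own proof: enumerate the axis and diagonal phase differences, use $|q_1\pm q_2|\le|q_1|+|q_2|$ with the hypothesis to make every cosine in \eqref{eq:quadraticform} positive (hence $J$ negative semi-definite), and then use connectivity of the coupling graph to show the kernel is spanned by $(1,\dots,1)^T$, exactly as in Theorem \ref{thm:nearest_neighbor}. Your closing remarks on why the bound is not sharp match the paper's own discussion preceding Theorem \ref{thm:diag_necessary}.
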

	
	\begin{proof}
		As before, there are few cases for adjacent oscillators to check:
		\begin{equation*}
		\bar{u}_k -\bar{u}_i = \left\{\begin{array}{cl}	
		\pm \frac{2\pi q_1}{L} & \mbox{for vertical neighbors} \\
		\pm \frac{2\pi q_2}{L} & \mbox{for horizontal neighbors} \\
		\pm \frac{2\pi (q_1\pm q_2)}{L} & \mbox{for diagonal neighbors}.
		\end{array}\right.
		\end{equation*}
		A sufficient condition for stability is that $ \cos(\bar{u}_k-\bar{u}_i)>0$ for all $i,k$. That is, when $|\bar{u}_k-\bar{u}_i|<\frac{\pi}{2}$. This follows from the hypothesis since $|q_1\pm q_2|\leq |q_1|+|q_2|$. Thus $x^T J x$ is negative semi-definite. The graph of oscillators is connected so we can repeat the arguments of Theorem \ref{thm:nearest_neighbor} to conclude that the $0$-eigenspace is 1 dimensional and so the $(q_1,q_2)$-twisted state is asymptotically stable.	
	\end{proof}
	
	We now turn to instability. Recall that a necessary and sufficient condition for instability of a given twisted state is that $x^T J x> 0$ for some $x \in \mathbb{R}^n$. Since an exhaustive search is impossible, one instead hopes to exploit the coupling and twisted state geometry to instead choose $x$ in a clever way to maximize $x^T J x$, (cf. the choice of $x$ in the nearest neighbor case.)
	
	Difficulty here arises from the fact that, generally, each non-zero $x_i$ term in \eqref{eq:quadraticform} results in additional contributions to the sum from $\cos\left(\frac{2\pi q_1}{L}\right),$ $\cos\left(\frac{2\pi q_2}{L}\right)$, $\cos\left(\frac{2\pi (q_1+q_2)}{L}\right), $ and $\cos\left(\frac{2\pi (q_1-q_2)}{L}\right)$ values. These can have opposite signs resulting in non-obvious cancellation effects. Of course, a sufficient condition for instability is to require that each of these terms is negative, as this will unambiguously result in positive values of $x^T J x$. 
	
	However to improve such a crude bound, we notice that contributions in the sum from diagonal elements, will lead to instabilities before horizontal or vertical phase shifts alone (since the phase shifts from diagonal terms can be larger than nearest neighbors). So, by careful choice of $x_i$ we can explicitly cancel the contributions from the smaller of the diagonal phase shifts, resulting in a slightly stronger bound.
	
	\begin{theorem}\label{thm:diag_necessary}
		Let $\sqrt{2} \leq r < 2$. If $\cos\left(\frac{2\pi q_1}{L}\right) +\cos\left(\frac{2\pi q_2}{L}\right) +\cos\left(\frac{2\pi (|q_1|+|q_2|)}{L}\right)<0$ then the $(q_1,q_2)$-twisted state is unstable.
	\end{theorem}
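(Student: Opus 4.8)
The plan is to produce one explicit vector $x\in\mathbb{R}^n$ with $x^{T}Jx>0$; by the discussion following Lemma~\ref{lem:jacobian} this forces $J$ to have a positive eigenvalue, hence the $(q_1,q_2)$-twisted state is linearly (and therefore) unstable. First I would reduce to $q_1,q_2\ge 0$: the relabeling $(i,j)\mapsto(i,(-j)\bmod L)$ is a graph automorphism of the $r$-nearest neighbor lattice — the torus distance is invariant under $j\mapsto -j$ — and it carries the $(q_1,q_2)$-twisted state to a $(q_1,-q_2)$-twisted state, so the corresponding Jacobians are conjugate and have the same spectrum; the same applies to $q_1\mapsto -q_1$. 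Since the hypothesis depends only on $|q_1|$ and $|q_2|$, this loses nothing. Following the remark preceding the statement, I want $x$ constant along the anti-diagonal direction $(1,-1)$, so that the contributions in \eqref{eq:quadraticform} coming from the anti-diagonal edges — which carry the \emph{smaller} phase shift $\frac{2\pi(q_1-q_2)}{L}$ — cancel identically. Concretely, take $x_{i,j}=1$ when $i+j\equiv 0\pmod L$ and $x_{i,j}=0$ otherwise, the indicator of a single anti-diagonal line $\mathcal{L}$, which contains exactly $L$ oscillators.

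The computation is then pure edge counting. In \eqref{eq:quadraticform} a term $(x_j-x_k)^2$ is nonzero only when the edge $\{j,k\}$ has exactly one endpoint in $\mathcal{L}$ (call such an edge \emph{crossing}), and for such an edge the ordered pairs $(j,k)$ and $(k,j)$ together contribute $-\cos(\bar{u}_k-\bar{u}_j)$, so that $x^{T}Jx=-\sum_{e\text{ crossing}}\cos(\text{phase shift along }e)$. Assuming $L\ge 3$, the eight neighbors of any oscillator are distinct, and each vertex of $\mathcal{L}$ has exactly six crossing edges: its two vertical neighbors (shift $\pm\frac{2\pi q_1}{L}$), its two horizontal neighbors ($\pm\frac{2\pi q_2}{L}$), and its two main-diagonal neighbors ($\pm\frac{2\pi(q_1+q_2)}{L}$), while its two anti-diagonal neighbors again lie in $\mathcal{L}$. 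Summing over the $L$ vertices of $\mathcal{L}$ — each crossing edge is met exactly once, through its unique endpoint in $\mathcal{L}$ — yields
\begin{equation*}
x^{T}Jx=-2L\left(\cos\tfrac{2\pi q_1}{L}+\cos\tfrac{2\pi q_2}{L}+\cos\tfrac{2\pi(q_1+q_2)}{L}\right),
\end{equation*}
and since $q_1,q_2\ge 0$ we have $q_1+q_2=|q_1|+|q_2|$. By hypothesis the bracketed sum is negative, so $x^{T}Jx>0$ and the twisted state is unstable.

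The content of the argument is the choice of $x$ (cancelling precisely the weaker diagonal interaction is what sharpens the naive ``all four cosines negative'' bound); the only delicate point in the execution is the combinatorial bookkeeping — checking that, for a vertex of $\mathcal{L}$, the sole neighbors landing back in $\mathcal{L}$ are the two anti-diagonal ones (this uses $i+j\equiv 0$ together with $i+j\pm 1\not\equiv 0$ and $i+j\pm 2\not\equiv 0\pmod L$, valid once $L\ge 3$) and that no two of the eight offset vectors coincide modulo $L$; the degenerate regime $L\le 2$ is excluded by the standing constraint $|q_i|\le\lfloor L/2\rfloor$. An equivalent and perhaps cleaner way to organize the bookkeeping is to write $x_{i,j}=h\big((i+j)\bmod L\big)$ with $h=\mathbf{1}_{\{0\}}$, use the identity $\sum_{(i,j)}F\big((i+j)\bmod L\big)=L\sum_{s\in\mathbb{Z}_L}F(s)$ to collapse \eqref{eq:quadraticform} into the one-dimensional sum $-L\sum_{s}\big[(\cos\tfrac{2\pi q_1}{L}+\cos\tfrac{2\pi q_2}{L})(h(s)-h(s+1))^2+\cos\tfrac{2\pi(q_1+q_2)}{L}(h(s)-h(s+2))^2\big]$, and observe that $\sum_s(h(s)-h(s+1))^2=\sum_s(h(s)-h(s+2))^2=2$.
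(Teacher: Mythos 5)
Your proposal is correct and takes essentially the same approach as the paper: both pick the indicator vector of a single (anti)diagonal line of $L$ oscillators so that the diagonal edges carrying the smaller phase shift $\tfrac{2\pi(q_1-q_2)}{L}$ lie entirely within the support and drop out of \eqref{eq:quadraticform}, leaving only the three cosines in the hypothesis. The only differences are cosmetic --- you reduce to $q_1,q_2\ge 0$ by a relabeling symmetry where the paper splits into two sign cases choosing the main diagonal or the off-diagonal accordingly --- and your prefactor $-2L$ is in fact the correct count of crossing edges (the paper's stated $-L$ undercounts by a factor of $2$, which is immaterial since only the sign matters).
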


	\begin{proof}
		If $q_1$ and $q_2$ have opposite signs then take 
		\begin{equation*}
		x_i = 1 \mbox{ if } i = k + (k-1)L,\;\; k\in\{1,\dots, L\}
		\end{equation*}
		and $x_i=0$ otherwise. That is $x_i=1$ if only if $i$ corresponds to an oscillator on the main diagonal of the $L\times L$ array of oscillators. On the other hand if $q_1$ and $q_2$ have the same sign then take 
		\begin{equation*}
		x_i = 1 \mbox{ if } i = kL - (k-1),\;\; k\in\{1,\dots, L\},
		\end{equation*} 
		and $x_i=0$ otherwise. Then $x_i=1$ if and only if $i$ corresponds to an oscillator on the off-diagonal. Plugging this into \eqref{eq:quadraticform}, we see that our choice of $x$ has resulted in the cancellation of the smaller of the two diagonal term contributions. Thus in either case we have
		\begin{equation*}
		x^T J x = -L\left( \cos\left(\frac{2\pi q_1}{L}\right) +\cos\left(\frac{2\pi q_2}{L}\right) +\cos\left(\frac{2\pi (|q_1|+|q_2|)}{L}\right)\right).
		\end{equation*}
		By assumption this term is positive and so $J$ has positive eigenvalues.
	\end{proof}
	
	In Figure \ref{fig:compare_theory} we plot the regions of guaranteed stability and instability on the $(q_1,q_2)$ plane, together with numerical simulations revealing the actual stability region.

	\begin{figure}[h]
		\centering
		{\bf a}\includegraphics[width = .45\textwidth]{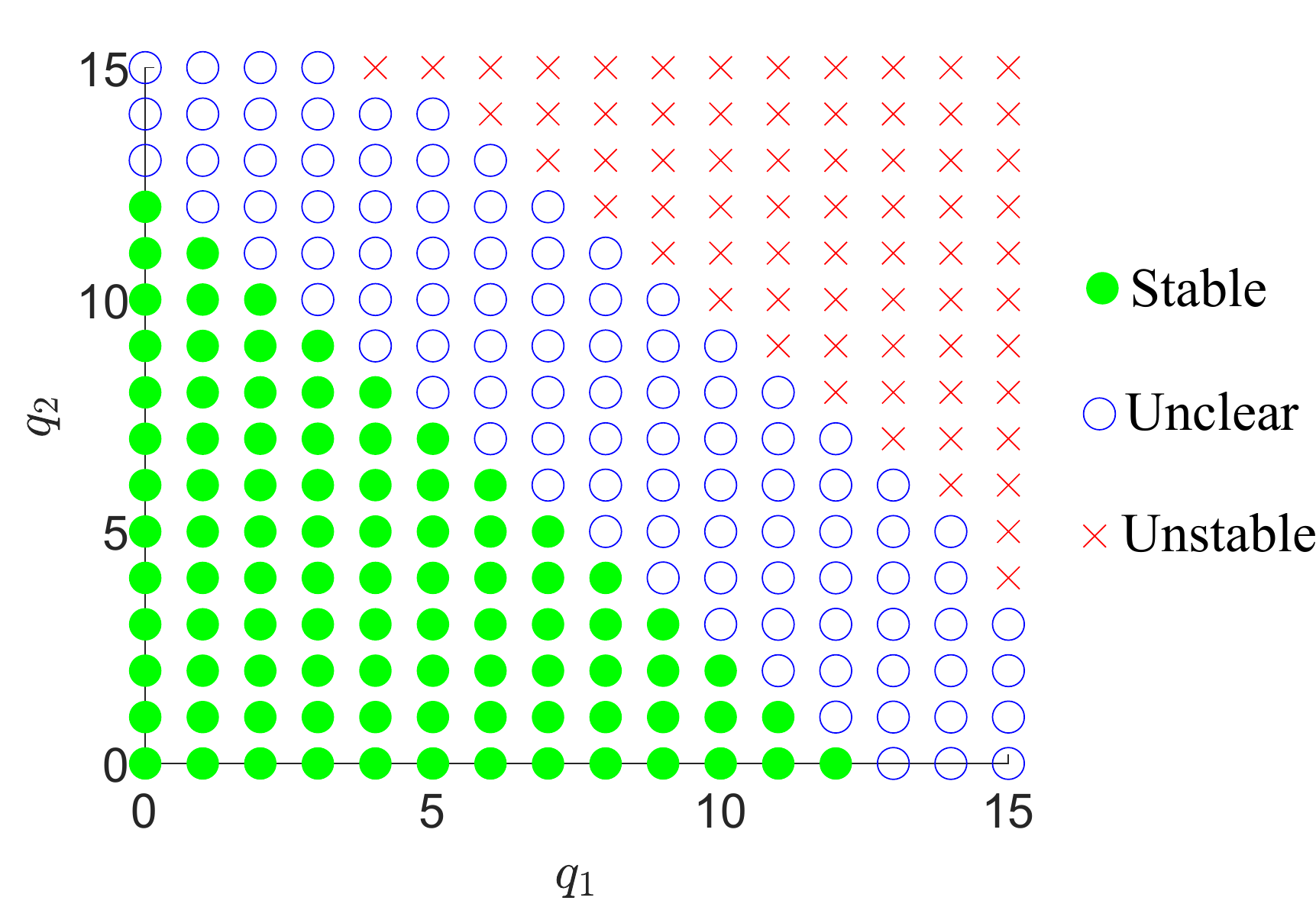} 
		{\bf b}\includegraphics[width = .45\textwidth]{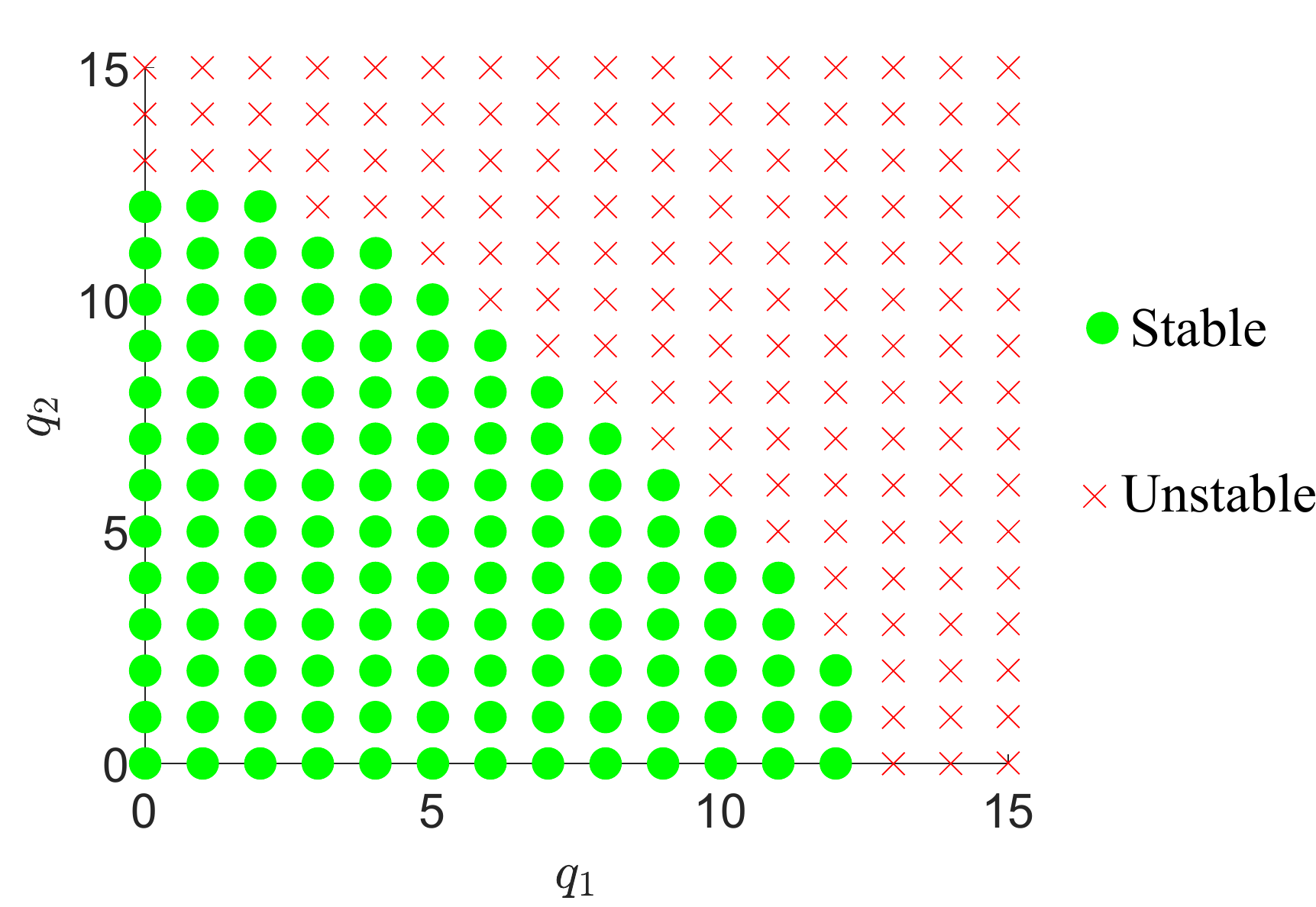}
		\caption{Stability of $(q_1,q_2)$-twisted states for a $52\times 52$ array of oscillators with $r=\sqrt{2}$ coupling. We compare ({\bf a}) theoretical bounds on the regions of stability and instability provided by Theorems \ref{thm:diag_sufficient} and \ref{thm:diag_necessary} and ({\bf b}) results from numerical simulations.}
		\label{fig:compare_theory}
	\end{figure}
	
	\subsection{Generalizing to longer ranges and higher dimensions} 
	
	As one increases $r$ and the dimension of the lattice, our previous arguments can be extended in a natural way. On an $m$-dimensional lattice we extend the definition of $r$-nearest neighbor coupling to be $mD$ Euclidean distance with periodic boundaries. Moreover the definition of $(q_1,q_2,\dots, q_m)$-twisted states extends in the natural way so that one generates independent twists in each dimension of the lattice. Next, notice that Lemma \ref{lem:jacobian} still applies since the only change mathematically is the structure of the adjacency matrix $A$, which is still symmetric. Thus we obtain the following result.
	
	\begin{theorem}\label{thm:general} Let $r\geq 1$. Then given $r$-nearest neighbor coupling on an $m$-dimensional lattice, the $(q_1,q_2,\dots q_m)$-twisted state is asymptotically stable if 
		\begin{equation}\label{eq:higherd}
		4\max_{(i_1,\dots, i_m) \in S_r}\left\{\sum_{j=1}^m i_j |q_i|\right\}<L
		\end{equation}
		where $S_r =\{(i_1,\dots, i_m) \colon \sum_{j=1}^m i_j^2 \leq r^2\}$.
	\end{theorem}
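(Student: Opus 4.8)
The plan is to follow the template of Theorem~\ref{thm:nearest_neighbor} almost verbatim, with the only new ingredient being a combinatorial identification of the ``worst'' adjacent pair of oscillators. First I would invoke Lemma~\ref{lem:jacobian}, which applies unchanged since the adjacency matrix of an $m$-dimensional $r$-nearest neighbor lattice is still symmetric; thus the Jacobian at the $(q_1,\dots,q_m)$-twisted state has quadratic form
\[
x^T J x = -\tfrac{1}{2}\sum_{j}\sum_{k\neq j} A_{j,k}\cos(\bar u_k - \bar u_j)(x_j - x_k)^2 .
\]
For a twisted state, two oscillators separated by the displacement vector $(\Delta_1,\dots,\Delta_m)$ satisfy $\bar u_k - \bar u_j = \frac{2\pi}{L}\sum_{\ell=1}^m \Delta_\ell q_\ell$, and they are adjacent precisely when $0 < \sum_\ell \Delta_\ell^2 \le r^2$.

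Next I would show that the hypothesis forces every cosine appearing in the sum to be positive. Over all admissible displacements, $\bigl|\sum_\ell \Delta_\ell q_\ell\bigr|$ is maximized by choosing $\operatorname{sgn}(\Delta_\ell) = \operatorname{sgn}(q_\ell)$ and setting $i_\ell := |\Delta_\ell|$; since the constraint $\sum_\ell \Delta_\ell^2 \le r^2$ depends only on $|\Delta_\ell|$, this maximum equals $\max_{(i_1,\dots,i_m)\in S_r}\sum_\ell i_\ell |q_\ell|$. Hence \eqref{eq:higherd} is exactly the assertion that $\frac{2\pi}{L}\bigl|\sum_\ell \Delta_\ell q_\ell\bigr| < \frac{\pi}{2}$ for every adjacent pair, so $\cos(\bar u_k - \bar u_j) > 0$ throughout, every term in the quadratic form has definite sign, $x^T J x \le 0$ for all $x$, and all eigenvalues of $J$ are non-positive.

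Finally I would pin down the kernel exactly as in Theorem~\ref{thm:nearest_neighbor}: $x = (1,\dots,1)^T$ lies in $\ker J$ (the trivial phase-shift direction), and if $y^T J y = 0$ then each strictly positive term $\cos(\bar u_k - \bar u_j)(y_j - y_k)^2$ with $A_{j,k} = 1$ must vanish, forcing $y_j = y_k$ on every edge. Since $r \ge 1$ the coupling graph contains all nearest-neighbor edges of the periodic lattice and is therefore connected, so $y$ is constant; the $0$-eigenspace is one-dimensional, all remaining eigenvalues are strictly negative, and asymptotic stability transverse to the shift direction follows.

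I expect the only real (though still routine) obstacle to be the combinatorial step: checking carefully that the sign-optimization over $(\Delta_1,\dots,\Delta_m)$ genuinely reduces to the maximum over $S_r$ as written, and that every $(i_1,\dots,i_m)\in S_r$ with not all $i_\ell = 0$ is actually realized as the displacement of some adjacent pair, so that no cosine with a larger argument has been overlooked. The quadratic-form computation, the connectivity of the graph, and the kernel argument all transfer directly from the two-dimensional nearest-neighbor case.
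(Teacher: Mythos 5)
Your proposal is correct and follows essentially the same route as the paper: reduce to the quadratic form of Lemma \ref{lem:jacobian}, observe that the hypothesis forces every cosine coefficient $\cos\bigl(\tfrac{2\pi}{L}\sum_{\ell}\Delta_\ell q_\ell\bigr)$ to be positive via the bound $\bigl|\sum_\ell \Delta_\ell q_\ell\bigr|\leq \sum_\ell |\Delta_\ell|\,|q_\ell| < L/4$, and then use connectivity of the coupling graph to show the kernel is spanned by $(1,\dots,1)^T$. Your treatment of the sign-optimization over displacements and the explicit connectivity justification for $r\geq 1$ are in fact slightly more careful than the paper's own write-up.
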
 
	
	\begin{proof}
		Consider two oscillators on the lattice which are separated by $i_1$ steps in the first dimension, $i_2$ steps in the second dimension, and so on. These oscillators are adjacent if and only if $\sum_{j=1}^m i_j^2 \leq r^2$. Their relative phase difference is $\sum_{j=1}^m \frac{2\pi i_jq_i}{L}$. This corresponds to a potentially non-zero contribution in \eqref{eq:quadraticform} with coefficient
		\begin{equation}\label{eq:cosine_term}
		\cos\left( \sum_{j=1}^m \frac{2\pi i_jq_i}{L}\right),
		\end{equation}
		and since $\left|\sum_{j=1}^m i_jq_i\right| \leq \sum_{j=1}^m i_j|q_i| < L/4$, it follows that \eqref{eq:cosine_term} is positive. This argument holds for any adjacent oscillators and so $x^T J x\leq 0$ for all $x \in\mathbb{R}^n$. As before since the adjacency matrix $A$ corresponds to a connected graph, the $0$ eigenspace is spanned by $x=(1,1,\dots, 1)^T$ and thus all other eigenvalues of $J$ are strictly negative. 	
	\end{proof}

	We apply Theorem \ref{thm:general} to two examples. First, in the $m=2$ case with $r=2$ we obtain the stability criterion $4\max\{2|q_1|,2|q_2|,|q_1|+|q_2|\}<L$ (where some terms in $S_r$ have been trivially omitted). This can be further simplified to
	\begin{equation*} 
	8\max\{|q_1|,|q_2|\}<L.
	\end{equation*}

	In the $m=3$ case with $r=\sqrt{2}$ we obtain the stability criterion
	\begin{equation*} 
	4\max\{|q_1|+|q_2|,|q_2|+|q_3|,|q_1|+|q_3|\}<L.
	\end{equation*}
	Comparing these regions to the region of stability obtained from the Jacobian matrix eigenvalues shows that these estimates align with the true regions of stability reasonably well, see Figure \ref{fig:r2_bounds}. The stable twisted states outside of the bounds \eqref{eq:higherd} are more subtle due to additional cancellation effects in \eqref{eq:quadraticform} which our analysis does not account for. 
	
	\begin{figure}[h]
		\centering
		{\bf a}\includegraphics[width = .45\textwidth]{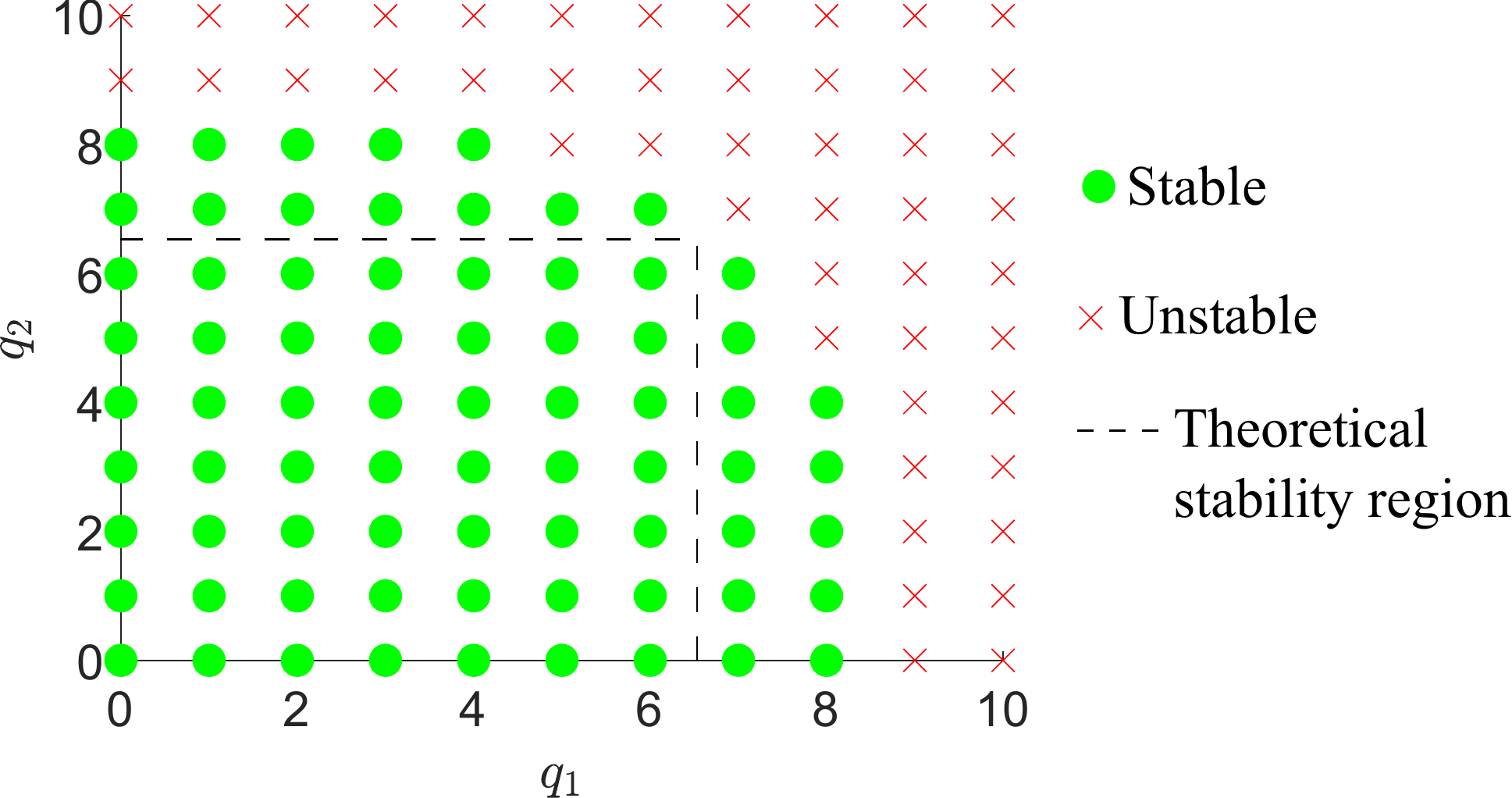}
		{\bf b}\includegraphics[width =.45\textwidth]{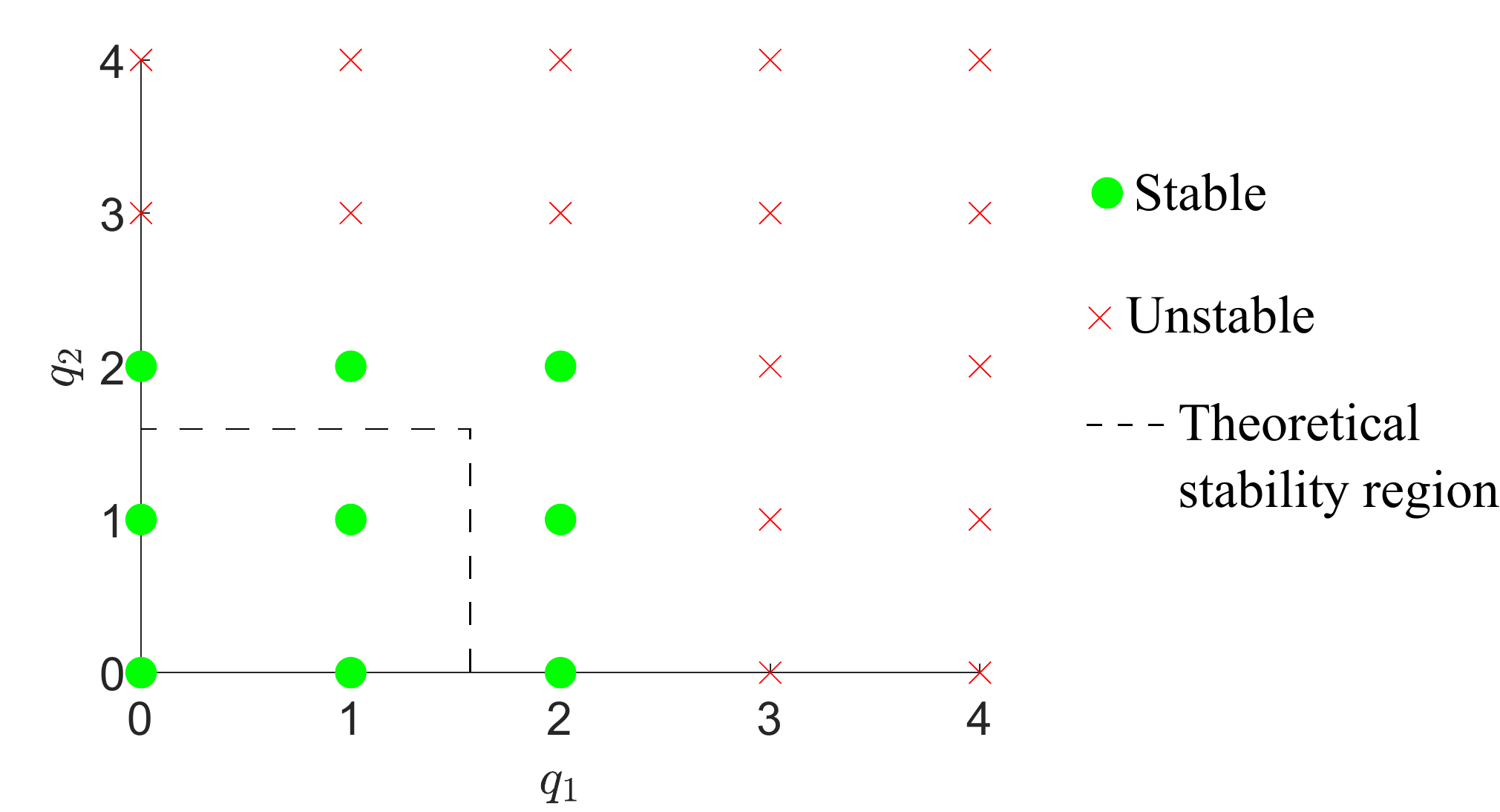}
		\caption{The region of stable twisted states guaranteed from Theorem \ref{thm:general} when compared to eigenvalues of Jacobian matrix. We test ({\bf a}) $r=2$ on a $52\times 52$ lattice and ({\bf b}) $r=\sqrt{2}$ on a $12\times 12\times 12$ lattice on the cross-section with fixed $q_3=1$.}
		\label{fig:r2_bounds}	
	\end{figure}

	Obtaining explicit estimates on the instability of twisted states is a more subtle business. In the case of nearest neighbor coupling one can take $x$ analogously to the construction in the second half of Theorem \ref{thm:nearest_neighbor}. Namely, one can let $x_j=1$ corresponding to a hyperplane of oscillators perpendicular to the twisted state with largest $q_i$. Then, all contributions to \eqref{eq:quadraticform} vanish except for exactly those terms containing $\cos\left(\frac{2\pi q_i}{L}\right)$, and so it follows for nearest neighbor coupling that the estimate \eqref{eq:higherd} is sharp in arbitrary dimensions.
	
	\section{Estimating eigenvalues of $J$}
	
	We end with an observation on the eigenvalues of $J$. In the study of twisted states on 1D ring coupling \cite{wiley2006size}, the Jacobian is a circulant matrix, and so its eigenvalues are explicit \cite{townsend2020dense}:
	\begin{align}
	\lambda_j &= -\sum_{j=1}^n A_{1,k} \cos(\bar{u}_k-\bar{u}_1) + \omega^j A_{1,2} \cos(\bar{u}_2-\bar{u}_1) + \cdots + \omega^{(n-1)j}A_{1,n} \cos(\bar{u}_n-\bar{u}_1)\\
	&= \sum_{k=1}^n A_{1,k}\cos(\bar{u}_k-\bar{u}_1)(\omega^{(k-1)j}-1),
	\end{align}
	where $\omega = \exp(2\pi i/n)$ is a primitive $n$th root of unity. Coupling on higher dimensional lattices results in a matrix which is no longer circulant. For example on a 2D lattice of size $L\times L$, many oscillators $u_j$ have right nearest neighbor $u_{j+1}$, but at $u_L$ the right nearest neighbor is instead $u_1$. This breaks the circulant nature of the matrix. 
	In general, in an $m$-dimensional lattice with side length $L$, such edge cases correspond to $O(L^{m-1})$ entries in the $L^m\times L^m$ Jacobian matrix. Thus it is not unreasonable to treat $J$ as ``almost'' circulant. 

	More rigorously, let $C$ be the circulant matrix which is obtained from $J$ by changing the fewest number of entries. Denote the eigenvalues of $C$ and $J$ as $\lambda^C_i$ and $\lambda^J_i$ respectively and assume they are arranged in increasing order. Then the Hoffman-Wielandt inequality provides an estimate on the average error between the eigenvalues:
	\begin{equation*}
	\frac{1}{n}\sum_{i=1}^n |\lambda^J_i - \lambda^C_i|^2 \leq \frac{1}{n}\|J-C\|^2_F,
	\end{equation*}
	where $\|A\|^2_{F} = \sum_{i,j} |A_{i,j}|^2$ is the Frobenius norm. Since there are $O(L^{m-1})$ entries which must be changed to make $J$ circulant, then 
	\begin{equation*}
	\frac{1}{n}\sum_{j=1}^n |\lambda^J_i - \lambda^C_i|^2 = O(L^{m-1}/n)= O(1/L).
	\end{equation*}
	
	Moreover, this motivates the hypothesis that the eigenvectors of $J$ may be well approximated by eigenvectors for circulant matrices:
	\begin{equation} 
	\nu_i := \left(1,\cos\left(\frac{2\pi}{n}i\right), \dots, \cos\left(\frac{2\pi}{n}(n-1)i\right)\right),\;\;\; i \in \{1,\dots, n\}.
	\end{equation}
	Writing
	\begin{equation*}
	\bar{\lambda}_i^J = \frac{\nu_i^T J \nu_i}{\|\nu_i\|^2},
	\end{equation*}
	Figure \ref{fig:circulant} shows that, indeed, the average error between $\nu_i^T J \nu_i$ and the eigenvalues of $J$ is quite small and improves as $L$ increases.
	
	In general, numerical tests show that $\nu_i^T J \nu_i$ can in most situations accurately predict stability of the twisted state (errors arise, e.g., on boundary cases when the 0 eigenspace is higher dimensional, since the $x_\ell$ is never truly a $0$ eigenvector). While we were unable to translate this idea to explicit analytical estimates on stability of twisted states, we believe it may be a fruitful endeavor for future work to improve bounds on regions of instability, especially for large $L$. That is, finding an $x_i$ to maximize the quadratic form \eqref{eq:quadraticform} is in general quite challenging, but evidently by choosing $\nu_i$ optimally one can aim to maximize the value $\nu_i^T J \nu_i$ and find sufficient conditions for instability of twisted states. 
	
	\begin{figure}
		{\bf a} \includegraphics[width = .45\textwidth]{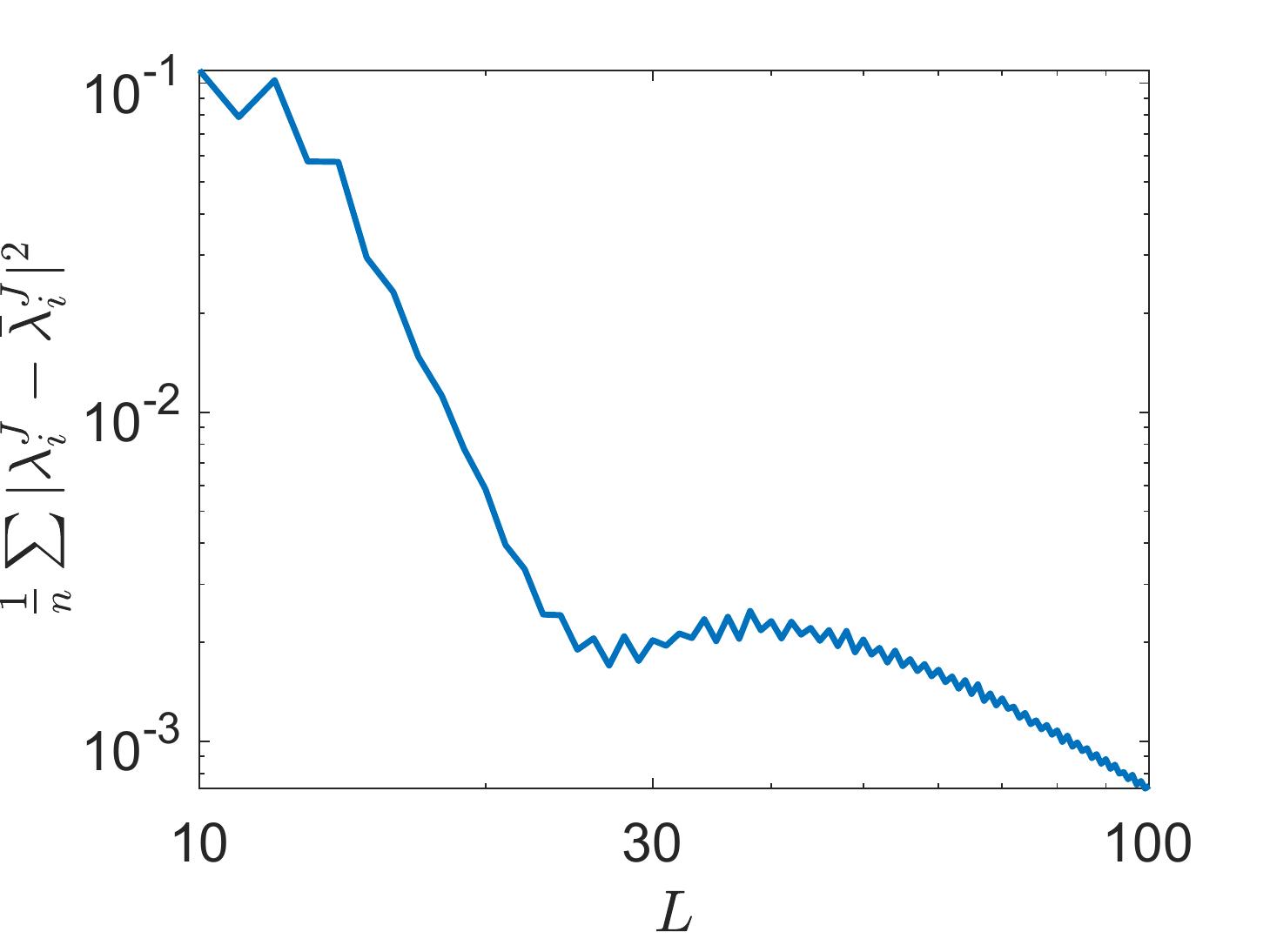}
		{\bf b} \includegraphics[width = .45\textwidth]{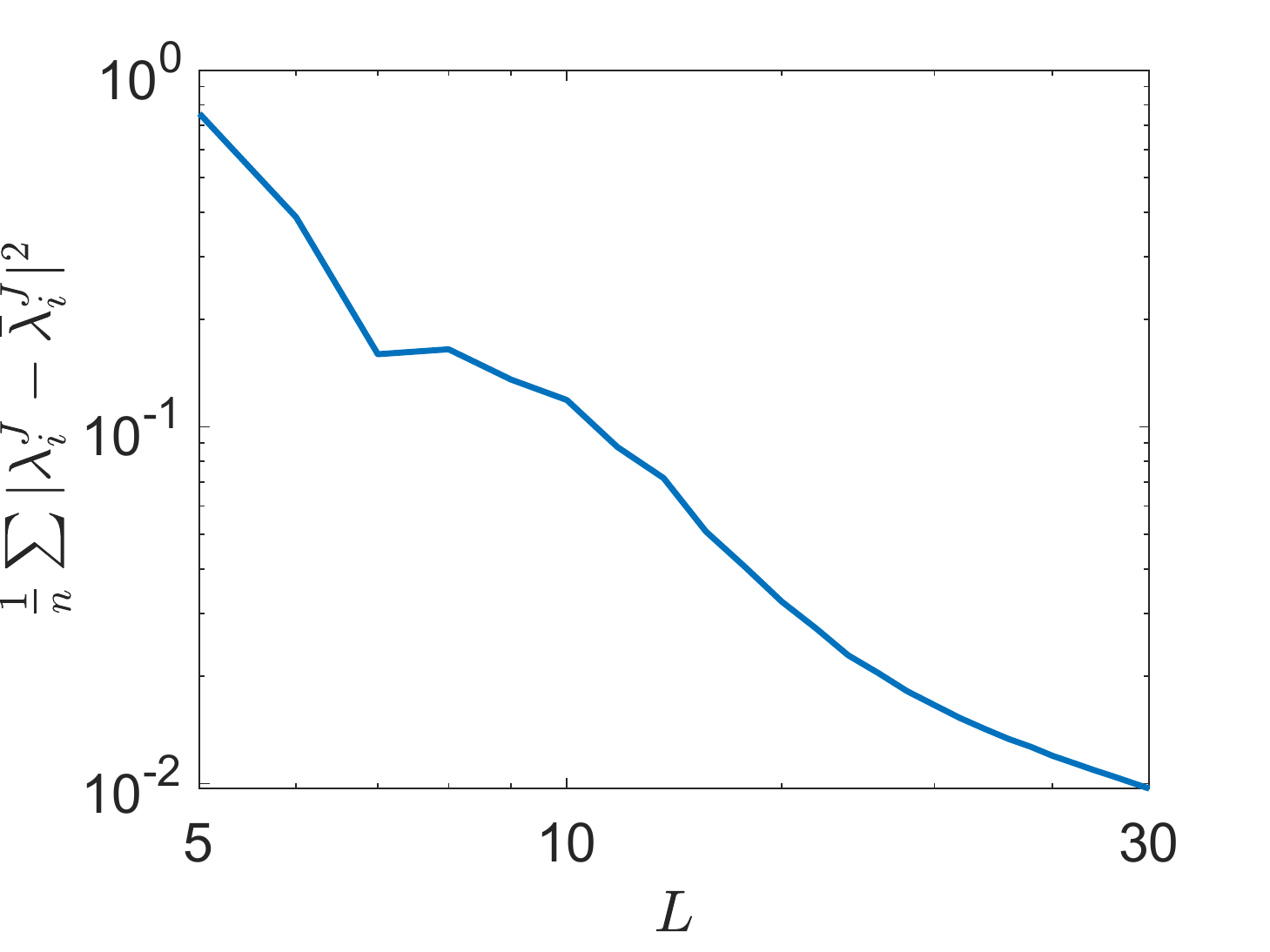}
		\caption{Average error between the eigenvalues of the Jacobian, $\lambda^J_i$, and their approximations $\bar{\lambda}^J_i$. We observe error decreases for both ({\bf a}) an $L\times L$ lattice with $r=2$ and a $(3,2)$-twisted state and ({\bf b}) an $L\times L\times L$ lattice with $r=\sqrt{2}$ and $(3,2,1)$-twisted state.}
		\label{fig:circulant}
	\end{figure}
	%
	%

	\section{Conclusion}

In this work we have studied linear stability of twisted states on lattices of Kuramoto oscillators. Previous work established stability of twisted states in 1D \cite{wiley2006size} and on nearest neighbor coupling in 2D and 3D \cite{lee2018twisted}, but our results are the first to our knowledge studying non-local interactions in arbitrary dimensions. Establishing linear stability of twisted states relied on finding either lower or upper bounds on the quadratic form associated to the Jacobian. These estimates are sharp for nearest neighbor lattices in arbitrary dimensions. To tighten estimates on lattices with longer range interactions, one must better exploit the non-trivial cancellation effects in \eqref{eq:quadraticform}. We also reported an interesting connection to eigenvectors of circulant matrices. Converting this numerical observation to analytical estimates on the regions of stability and instability remains elusive and a goal of future studies. Additionally, we propose potentially interesting work to be done in the borderline cases where the linear stability analysis fails, as well as in understanding the basin of attraction of the twisted state solutions\cite{wiley2006size,lee2018twisted}.
	
	\medskip
	
	\noindent{\bf Acknowledgements.} Numerical simulations were completed using the high performance computing cluster (ELSA) at The College of New Jersey. Funding of ELSA is provided in part by National Science Foundation OAC-1828163. MSM was additionally endorsed by a Support of Scholarly Activities Grant at The College of New Jersey. The authors would like to thank Dr. Nicholas Battista for fruitful conversations and for his guidance in parallel processing on ELSA.


\end{document}